 \newtheorem{theorem}{Theorem}
\newtheorem{definition}{Definition}
\title{Simple, compact and robust approximate string dictionary\thanks{
Most of this work was done when the second author was at LIAFA, University Paris Diderot - Paris 7. Part of the work was done while the second author was visiting LIAFA. The work was partially supported by the french
 ANR project MAPPI (project number ANR-2010-COSI-004).}}
\author[1]{Ibrahim Chegrane}
\author[2]{Djamal Belazzougui}
\affil[1]{USTHB - Faculty of Electronics and Computer Science, Laboratory of Artificial Intelligence (LRIA), Algeria}
\affil[2]{Helsinki Institute for Information Technology (\textsc{HIIT}),
Department of Computer Science, University of Helsinki}
\begin{document}
\maketitle
\begin{abstract}
This paper is concerned with practical implementations of approximate string dictionaries that allow edit errors. In this problem, we have as input a dictionary $D$ of $d$ strings of total length $n$ over an alphabet of size $\sigma$. Given a bound $k$ and a  pattern $x$ of length $m$, a query has to return all the strings of the dictionary which are at edit distance at most $k$ from $x$, where the edit distance between two strings $x$ and $y$ is defined as the minimum-cost sequence of edit operations that transform $x$ into $y$. The cost of a sequence of operations is defined as the sum of the costs of the operations involved in the sequence. 
In this paper, we assume that each of these operations has unit cost and consider only three operations: deletion of one character, insertion of one character and substitution of a character by another. 
We present a practical implementation of the data structure we recently proposed and which works only for one error. We extend the scheme to $2\leq k<m$. 

Our implementation has many desirable properties: it has a very fast and space-efficient building algorithm. The dictionary data structure is compact and has fast and robust query time. 

Finally our data structure is simple to implement as it only uses basic techniques from the literature, mainly 
hashing (linear probing and hash signatures) and succinct data structures (bitvectors supporting rank queries). 
\end{abstract}
\maketitle

\section{Introduction}
The problem of approximate dictionaries is widely studied. The problem is defined as follows: we are given (in advance) a dictionary $D=\{x_1,x_2,\ldots x_d\}$ of $d$ words (strings) of total length $n$ over an alphabet of size $\sigma$~\footnote{For practical purpose we assume the alphabet to be the integer domain $[1..\sigma]$.} and a threshold $k$. We want to build a data structure on $D$ so as to be able to answer the following queries: given a string $x$ of length $m$, return all the words of the dictionary which are 
at edit distance at most $k$ from $x$, where the edit distance 
(or Levenshtein distance~\cite{Lev66}) between two strings 
is defined as the minimum-cost sequence of edit operations needed to transform one of the strings into the other.
The cost of a sequence of edit operations is simply the sum of the costs of individual operations. 

In this paper, we assume that each of these operations has unit cost,
and only consider three kinds of edit operations: insertions, deletions and substitutions. 

The important quantitative parameters to judge the quality of solutions to this problem are :
\begin{enumerate}
\item The space usage of the solution in number of bits. An ideal solution should use $n\log\sigma$ bits (or less if some kind of compression is used). 
\item The time needed to build the dictionary data structure. 
\item The time needed to answer a query. 
\end{enumerate}

In addition to the three parameters above, other important considerations are whether the solution is incremental or not (whether it supports fast addition of strings to the dictionary) and the simplicity of the solution (whether the solution uses sophisticated data structures or not).

The problem has many applications. Among the most important ones, we mention: 
\begin{itemize}
\item Spell checking  in text editors~\cite{pollock1984automatic}. 
\item Correction of user typing errors in search engines~\cite{cucerzan2004spelling}.  
\item Post-processing for optical character recognition~\cite{reynaert2008non}. 
\item Online data cleaning in databases~\cite{chaudhuri2003robust}. 
\item Password cracking~\cite{manber1994algorithm}. 
\end{itemize}
Many solutions to the approximate dictionary problem under edit distance have been proposed in the literature~\cite{Na01,Bo11}. 

Some of those solutions are based on heuristics which can have good average-case performance on the assumption that the dictionary strings come from some random distribution. 
However those solutions usually come with a bad worst-case performance. In fact it is easy to come up with a worst-case 
example for many of those solutions. As an example we take the following heuristic which is probably the most popular one: divide the query string $x$ into $k+1$ pieces, then we are sure that one of the pieces will appear as a substring of one of the words in the dictionary if that word happens to be at edit distance at most $k$ from $x$. 

Thus a query can restrict its attention only to the dictionary strings that contain one of the pieces of the query string. 

The heuristic above has been discovered and rediscovered many times (even from as early as 1976 [24]~\cite{Ri76}) and efficiently applied in many contexts.

However, we note that this heuristic is not robust. In fact, it can have a very bad performance in the worst-case. A simple example which forces this worst-case is the following: consider a dictionary of $d$ binary strings (alphabet $\{0,1\}$) containing all the strings of length $\log d$ bits and consider searching for a string $x$ with error threshold $1$. If we cut the query string into two pieces of length $\log d/2$, we will know that at least $\Omega(\sqrt{d})$ strings will contain each of the two pieces. Thus at least $\Omega(\sqrt{d})$ dictionary elements will need to be considered. 
This artificial example shows that even with very short strings and a small number of errors, the query can have a very bad performance. 
Although the occurrence of such an extreme case is unlikely to happen on practical datasets, we will experimentally 
show that both the worst-case and average-case number of strings to be considered by a single query can considerably increase when the size of the data sets is increased. 


In this paper we present solutions that try to avoid the use of heuristics as much as possible. 
Our solution is based on the theoretical data structure presented in~\cite{Be09}. The solution presented there provably uses $O(n\log\sigma)$ bits of space, which is optimal up to a constant factor considering that the dictionary in its raw form occupies $n\log\sigma$ bits. The solution has a worst-case query time of $O(m+occ)$, where $occ$ is the number of reported occurrences and an expected $O(n)$ construction time. 

Our engineered variant also uses $O(n\log\sigma)$ bits, but its query and update times do not exactly match the ones of the theoretical solution. We will show however that the performance of our solution will be close to the previous one under assumptions that are likely to hold in practice. We verify experimentally that our solution is indeed competitive with previously proposed solutions and that the assumptions are realistic (they almost hold in practice). 

Our implementation can be downloaded at \url{http://code.google.com/p/compact-approximate-string-dictionary/}. It is licensed under the GNU Lesser General Public License (LGPL). 

In this paper we will state very few time or space bounds. Our time bounds will hold in a standard RAM model with word length $w=\Omega(\log n)$ where all standard arithmetic and logic instructions are assumed to run in constant time. For accuracy we will express our space bounds in terms of bits rather than in words.

\section{Related work}
The problem of approximate string dictionaries has received considerable amount of attention. We only review here some recent work. For a more thorough review of the existing literature we refer the reader to~\cite{Na01,Bo11}. 
Belazzougui~\cite{Be09} recently presented a dictionary specifically tailored for one edit error. The dictionary has a worst-case query time $O(m+occ)$ (where $occ$ is the number of reported occurrences), and uses asymptotically optimal $O(n\log\sigma)$ bits of space. Belazzougui and Venturini~\cite{BV12} presented compressed variants. In one of the variants, the dictionary uses $2nH_k+O(n)$ bits of space (where $H_k$ stands for the $k$th order entropy of the strings stored in the dictionary, with $k=o(\log_\sigma m)$) and answers queries in almost $O(m+occ)$ time. 

Our algorithm is in a sense similar to the FASTSS algorithm introduced in~\cite{BHHS08}. In this algorithm, a hash table stores all the strings that can be obtained by deleting (at most) $k$ characters from any string in the dictionary. Then at query time the hash table is queried for all the strings that can be obtained by deleting (at most) $k$ characters from the string. Then, it can be proved that all the strings at edit distance at most $k$ will be contained in the resulting set and the remaining task will be to test all the candidates. 
A clever improvement to FASTSS was achieved in~\cite{KLS10}, where instead of storing all the strings obtained by deleting one character, the dictionary stores pointers to the strings. Following this,~\cite{Bo12} compresses the strings in a different way: instead of storing the strings, only deleted characters along with their positions are stored. Space-efficient perfect hashing methods~\cite{cmph} are then used to store the lists of positions and characters. 

For more than one error, the work of Cole et al.~\cite{CGL04}, was the first to present good worst-case performance for any $k$. However the bounds are not competitive with the recent works both in terms of space and time for the case of $k=1$. Also the data structure  presented there seems rather complicated and incurs a polylogarithmic additive term in the query time which could in practice dominate the total query time. We are not aware of any implementation of that data structure.

\section{The data structure}
Our algorithm is based on the algorithm described in~\cite{Be09}, but which we heavily modify in order to get good practical performance. 
In order to make the algorithm practical we do the following:
\begin{enumerate}
\item We replace perfect hashing with more efficient linear probing. 
\item We mitigate the potential performance degradation of linear probing versus perfect hashing by using additional signature bits to make the query time even faster. 
\item We reduce the space used by the hash tables by using a compaction scheme that eliminates all the empty positions in the hash table. 
\item We suppress the two tries and the right-left signatures. Instead we directly compare the two strings using straightforward string comparison.  
\end{enumerate}
Moreover, we extend the algorithm to handle two or more errors. In the next subsections, we will give more details on our data structure. 

\subsection{Linear probing based dictionaries}
Our dictionary is composed of two basic components:
\begin{enumerate}
\item The substitution dictionary.
\item The exact dictionary.
\end{enumerate}
The substitution dictionary (or substitution store) is used for one purpose. Given a string $x\phi y$ with one wild card $\phi$~\footnote{A wild card is a special character that matches any character in the alphabet.}, report the set $S_{x\phi y}$ of all the characters which when substituted at that position will lead to a string in the dictionary. Storing exactly the substitution dictionary would take too much space. We instead use an approximate substitution dictionary that returns for each string $x\phi y$ a superset of the set $S_{x\phi y}$. Each returned character will then give us a candidate string which we need to check using the exact dictionary. 
The substitution store as originally proposed in~\cite{Be09} is composed of substitution lists. At construction time, the original algorithm does the following: for each word $x_i[1..n_i]$ of length $n_i$ and for each $j\in[1..n_i]$, append the character $x_i[j]$ at the end of the substitution list that corresponds to the string $x_i[1..j-1]\phi x_i[j+1..m_i]$. 

The exact dictionary will be used to locate the exact occurrences and will permit to check as to whether the candidate string is in the dictionary. 

Checking whether a word of length $m$ exists in a dictionary takes $O(m)$ time if implemented naively. Using bit-parallelism the checking can be improved to $O(\lceil\frac{m\log\sigma}{w}\rceil)$ time. 
In our case, we need to check the existence of a word obtained by modifying the original word via an insertion, a deletion or a substitution, where the characters to be substituted or inserted are obtained from the substitution store. We can easily preprocess the original word into a number of arrays such that computing any word obtained by applying an insertion, deletion or substitution can be done in constant time. Then checking that any modified word exists in the dictionary can also be done in $O(\lfloor\frac{m\log\sigma}{w}\rfloor)$ time by direct comparisons. 
In~\cite{Be09} a sophisticated scheme based on tries and on embedding signatures into the dictionary strings is used to reduce the time to check the existence of a modified word to $O(1)$. The sophisticated theoretical scheme has a significant space and time overhead which in practice will probably make it not competitive with a naive scheme that just compares the two words in a naive way. The reasons are two-fold. First, $m$ is usually small, and second the naive comparing will probably not induce more than one cache-miss as the compared words will probably fit in a one cache-line. 

We will use linear probing~\cite{Knuth63noteson} to implement both dictionaries. 
The reason why we use linear probing is that it has been proved that linear probing is among the most practical hashing methods. The good performance is essentially due to the good cache behavior~\cite{HL05}. In practice only one cache-miss is potentially incurred by any query to the hash table. This is especially important when the memory footprint of the data structure becomes very large and the cost of cache and TLB misses~\footnote{TLB stands for translation look ahead buffer which is a small in-CPU cache that serves the purpose of translating virtual to physical memory page addresses. When the number of accessed memory pages becomes too large, then there will be frequent TLB misses incurring additional slow memory accesses to refill the buffer entries.} tend to dominate the query time. 

In contrast, if we had chosen to use minimal perfect hashing to implement the exact dictionary, then a probe to the dictionary would have incurred between $3$ and $4$ memory probes (and correspondingly $3$ to $4$ potential cache misses) depending on the implementation. Perfect hashing implementations would incur between $2$ and $3$ memory probes and then retrieving the key from its place would have incurred further $1$ memory probe. In addition the original version of the substitution list which relied on minimal perfect hashing needs one additional component. This component is a prefix-sum data structure that stores the size of each substitution list. It is essentially an indexed bit vector of $2n$ bits. Thus a substitution list would have incurred between $4$ and $5$ memory probes. 

We note that another important reason for not choosing minimal perfect hashing is that it is by essence non-incremental and slow  to build. It also uses $O(n\log n)$ bits of space to build. 

Finally we note that the main advantage of using perfect hashing comes from memory saving. We will show that using bit-based compaction, we can gain even more space (see Section~\ref{sec:bit_compact}).  

We now describe in more detail the linear probing scheme we use. Suppose that we use a hash table of size $t$, where the hash table slots are numbered from $0$ to $t-1$. We need to mark all the hash table elements as empty (for that we use a sentinel). Then, inserting an element $x$ into a hash table of size $t$ involves first computing $h(x)$ and then checking if position $i=h(x)\bmod t$ is empty. If this is the case then $x$ is inserted at position $i$ and the position $i$ is marked. Otherwise we check at position $i+1,i+2\ldots$, until we reach an empty position (if one of the position equals $t-1$, then the next one will be position $0$) and mark it. 
Queries are done as follows: if an element is mapped to location $i$, then we scan all consecutive locations starting from position $i$ until we find the element we are looking for or we reach an empty non-marked location. 

In our case, we use several hash tables to implement the exact dictionary and only one to implement the substitution store. For the former, the inserted elements are words of the dictionary (strings). 
For the latter, the inserted elements are either characters or pairs of characters and hash signatures. 
The insertion of a dictionary word (string) of length $m$ in the exact dictionary takes expected time $O(m)$ and searching for a word in the exact dictionary takes expected $O(m)$ time as well. 

We note that the number of distinct word lengths in the dictionary can not exceed $\sqrt{n}$. This is because the total number 
of strings of length more than $\sqrt{n}$ can not exceed $\sqrt{n}$. 
The exact dictionary can then be implemented by partitioning the words into at most $\sqrt{n}$ groups of equal-length strings, and use a separate hash table for each. 

The substitution store hash table is implemented as follows. For each word $x_i[1..n_i]$ of length $n_i$ and for each $j\in[1..n_i]$, insert the character $x_i[j]$ in the substitution hash table (using linear probing) at the position $i=h(x_{i,j})\bmod t$, where $x_{i,j}=x_i[1..j-1]\phi x_i[j+1..m_i]$.

For analyzing the performance of insertions in the substitution store, we first note that all the characters that belong to the same substitution list will be mapped to the same location in the hash table. That is, two characters $c_1$ and $c_2$ will be mapped to the same position $i=h(p\phi v)$ in the hash table whenever there exists a pair of strings $(p,v)$ such that both $pc_1v$ and $pc_2v$ belong to the dictionary. 

Thus the insertions of those characters will collide and slow down the insertion. More importantly, those collisions can considerably slow down querying. Consider that in the worst case a substitution list could have up to $\sigma$ elements in it which will all collide on the same position of the substitution hash table. Now if another substitution list with just one element is mapped to the same position, then a query to both substitution lists will have to traverse more than $\sigma$ elements even though the second substitution list has only one element in it. However we will argue that in practice it is very unlikely that many substitution lists will have more than one element in them. Indeed in the experiments section, we will see that the performance of our data structure is very stable and does not depend on this number. 
 
\subsection{Hash signature}
In order to further reduce the query time, we will append $r=O(1)$ bits to each entry of the substitution store. 
More precisely if we assume that a constant number of slots are explored, then using some bits can ensure that less than $1$ slot on average will generate a candidate string. The hash signature is computed based on the hash value of the string $x_i[1..j-1]\phi x_i[j+1..m_i]$, and is only $r$ bits long. When inserting one character that belongs to a substitution list, we will append the signature of its corresponding substitution list. The exact role of the signature is to allow filtering of the hash table characters at query time. We will only consider that a character is part of a substitution list if its corresponding signature is the same as that of the substitution list. Thus the signatures will help mitigate the effect of collisions between substitution lists.

\subsection{Bit-array compression}
\label{sec:bit_compact}
We will make use of a bit-array based compression similar to Google sparsehash~\cite{GSH}~\footnote{There are two main differences between sparsehash and our scheme. The first one is that sparsehash relies on quadratic probing rather than linear probing. The other one is that sparseshash uses dynamic rather than static memory allocation since it needs to divide the slots into blocks allocated using dynamic memory allocation in order to allow updates to the table.}. One drawback of this method is that it makes our dictionary no longer incremental. Given a hash table with load factor $\alpha$ (a hash table containing $n/\alpha$ slots) the compaction outputs two vectors: a compacted hash table of $n$ slots and a bit vector of $n/\alpha$ bits. It works in the following way: we traverse all the hash table slots and write a bit-vector where we write a zero for each empty hash table slot and a one for each non-empty hash table slot. During the traversal, we append every non-empty slot to the compacted hash table initially empty. 

In order to support queries on the compacted hash table, we will index the bit vector so as to support rank operations~\cite{jacobson1989space,munro1996tables} (a rank for position $i$ operation over a bit vector $b$ counts the number of ones in $b[1..i]$ ). We use a very simple rank implementation that increases the original space of the bit vector by a factor $1+1/\delta$ for some small constant $\delta$. We sample every $\delta$ word of the bit vector and store the rank up to that position. This ensures space $N(1+1/\delta)$ for a bit vector that has $N$ bits. 

Doing a rank query on the bit-vector will return a position in the compacted hash table and then we count the number of consecutive ones in the bit-vector to get the number of non-empty slots. 
\subsection{Checking Occurrences}
To check for potential occurrences, we use a strategy similar to Boytsov~\cite{Bo12}. Instead of computing the edit distance between the query string and the candidate string, we directly construct a modified query string by applying the candidate edit operation and then do a direct comparison between the obtained string and the candidate string. It can easily be shown that obtaining the modified query string can be done in amortized constant time. For example, a candidate string $p'$ for insertion of a character $c_1$ after position $i+1$ can be obtained from a candidate string for insertion of a character $c_0$ after position $i$ by turning back the character number $i+1$ to be equal to $p[i+1]$ and modifying character $i+1$ to $c_1$. That is initially we had $p[1..i]c_0p[i+1..m]$ and we get $p[1..i+1]c_1p[i+2..m]$ in constant time by changing just two characters. 

Then comparing two strings can take advantage of the strong processor word-parallelism to compare two candidate strings in time $O(\lfloor\frac{m\log\sigma}{w}\rfloor)$ time. 
As it was argued by Boytsov, it is expected that such a strategy will run much faster than the one based on computing the edit  distance between the candidate string and the modified query string. 

\subsection{Summary}
We can now state two theorems that summarize the performance of our data structure, based on the two assumptions formalized by the following definitions. 
\begin{definition}
A string hash function is said to be incremental if after some preprocessing done on an input string $x$, computing the hash value of any string at distance one from $x$ takes constant time. 
\end{definition}
\begin{definition}
A hash function that maps elements from a set $U$ into a set $V$ is said to be fully random, if it behaves as if it was randomly chosen from the set of all possible functions from $U$ into $V$. 
\end{definition}
The construction space and final space usage of our data structure is summarized by the following theorem. 
\begin{theorem}
\label{theo:space_one_error}
The dictionary presented in this section occupies $O(n\log\sigma)$ bits (where $n$ is the total length of the strings in the dictionary). The peak space usage during the construction is $O(n\log\sigma)$. If the bit-compaction is used, then the final dictionary occupies at most $n(2\log\sigma+O(1))$ bits of space. 
\end{theorem}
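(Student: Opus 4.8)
The plan is to analyze the two components of the data structure separately---the exact dictionary and the substitution store---and then add their contributions. The two quantities that drive every bound are fixed combinatorially: the total content of the dictionary is $n\log\sigma$ bits, and the substitution store receives exactly one entry per (word, position) pair, i.e. $\sum_i n_i = n$ entries in total. I would isolate these two counts first, since every subsequent estimate is just a load-factor multiple of one of them.

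For the uncompacted bounds (the first two sentences of the theorem) I would proceed component by component. The exact dictionary is partitioned into at most $\sqrt{n}$ hash tables, one per distinct length $\ell$; the table for length $\ell$ holds $d_\ell$ strings and, at load factor $\alpha$, uses $(d_\ell/\alpha)\cdot\ell\log\sigma$ bits, so summing over $\ell$ gives $(1/\alpha)\sum_\ell d_\ell\,\ell\log\sigma=(1/\alpha)\,n\log\sigma$ bits. The substitution store is a single table of $n/\alpha$ slots, each holding a character together with its $r=O(1)$ signature bits, i.e. $\log\sigma+r$ bits per slot, for a total of $(n/\alpha)(\log\sigma+r)$ bits. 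Both are $O(n\log\sigma)$ because $\alpha$ is constant and $r=O(1)$. For the peak construction space, the key observation is that linear probing is built in place: unlike minimal perfect hashing (which the earlier discussion notes needs $O(n\log n)$ bits to build), the tables never grow beyond their final $O(n\log\sigma)$ size, and the only auxiliary storage is the $O(m\log\sigma)$ scratch space used for the incremental hash computations, so the peak stays $O(n\log\sigma)$.

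For the compacted bound (the sharp $n(2\log\sigma+O(1))$ claim) I would count the compacted tables exactly, since compaction removes every empty slot. After compaction the exact dictionary stores precisely $d_\ell$ strings of $\ell\log\sigma$ bits in the length-$\ell$ table, whose sum over $\ell$ is exactly $n\log\sigma$ bits---no padding is wasted because all slots within a table share one length---plus one rank-indexed bit vector per table, of total length $\sum_\ell d_\ell/\alpha = d/\alpha \le n/\alpha$ bits, which the rank structure inflates by a factor $1+1/\delta$. The compacted substitution store holds exactly $n$ entries of $\log\sigma+r$ bits, i.e. $n(\log\sigma+r)$ bits, plus its rank-indexed bit vector of $(n/\alpha)(1+1/\delta)$ bits. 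Adding the two $n\log\sigma$ terms (one per component) and collapsing the signature bits and both rank-indexed bit vectors---all of which are $O(n)$---into the lower-order term yields $2n\log\sigma + O(n) = n(2\log\sigma+O(1))$.

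The routine part is the arithmetic; the part that needs care is verifying that no hidden per-string or per-slot overhead spoils the factor of exactly two. In particular I would check that the length-based partition really lets us omit per-string length fields (the length is implied by the table it lives in), that the at most $\sqrt{n}$ distinct tables contribute only $O(\sqrt{n}\log n)$ bits of table headers---negligible against $n\log\sigma$---and that the bit-vector-plus-rank overhead together with the $r$ signature bits are genuinely $O(n)$ rather than $O(n\log\sigma)$, so that only the two unavoidable $n\log\sigma$ terms survive in the leading coefficient.
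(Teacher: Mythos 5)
Your proposal is correct and follows essentially the same route as the paper's proof: a component-wise accounting of the exact dictionary and the substitution store, with each uncompacted table costing a $1/\alpha$ load-factor multiple of its content and the compacted versions costing exactly $n\log\sigma$ bits each plus $O(n)$ lower-order terms for signatures and rank-indexed bit vectors. Your treatment is in fact slightly more explicit than the paper's (spelling out the sum over length classes and checking that table headers and implicit length fields contribute only lower-order terms), but the decomposition and arithmetic are the same.
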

\begin{proof}
The words of the dictionary given as input occupy $n\log\sigma $bits of space. The substitution store has $n/\alpha=O(n)$ slots of length $\log\sigma+r$ bits each out of which only $n$ are occupied. The exact dictionary has $d$ slots occupying a total of $(n/\alpha)\log\sigma=O(n\log\sigma)$ bits of space. The peak space of the construction algorithm is clearly $O(n\log\sigma)$ bits, since it only reads the input and writes the output which occupy a total $O(n\log\sigma)$ bits of space. 
When compaction is used, then the space for the exact dictionary is reduced to $n\log\sigma+d(1+1/\alpha)+o(d)=n\log\sigma+O(d)$ bits and the space for the substitution store is reduced to $n(1+r+\log\sigma)+o(n)=n\log\sigma+O(n)$ bits 
for a total of $n(2\log\sigma+O(1))$ bits of space. 
\end{proof}
The performance of the construction and query time of the data structure is summarized by the following theorem. 
\begin{theorem}
\label{theo:time_one_error}
Assuming fully random and incremental hashing and that each substitution list contains a single element, then given a query string of length $m$, the expected query time is $O(m\lceil\frac{m\log\sigma}{w}\rceil)$ and given dictionary whose total string length is $n$, the expected construction time is $O(n)$. 
\end{theorem}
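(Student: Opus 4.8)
The plan is to bound the query and the construction costs separately, in each case reducing the work to a collection of elementary hash-table operations whose individual costs are controlled by the two stated assumptions together with the single-element hypothesis.

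For the query bound, I would first observe that a one-error query decomposes into $O(m)$ probes: one exact-dictionary lookup for $x$ itself; for each of the $m$ substitution positions a lookup of the pattern $x[1..j-1]\phi x[j+1..m]$ in the substitution store; for each of the $m+1$ insertion positions a lookup of the length-$(m+1)$ wild-card pattern $x[1..j]\phi x[j+1..m]$ in the substitution store; and for each of the $m$ deletion positions a direct exact-dictionary lookup of $x[1..j-1]x[j+1..m]$. Every one of these patterns is at distance one from $x$, so after an $O(m)$ preprocessing pass the incremental-hashing assumption lets me compute each of the $O(m)$ needed hash values in $O(1)$ time. For each probe, the fully-random assumption lets me invoke the classical analysis of linear probing: at a constant load factor the expected number of slots scanned is $O(1)$. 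The single-element hypothesis then guarantees that each substitution-store probe returns an expected $O(1)$ candidate characters (the $r$ signature bits push this below $1$), so the total expected number of candidate strings produced across all $O(m)$ probes is $O(m)$. Each candidate is materialised in amortised $O(1)$ time, as explained in the checking-occurrences section, and then verified by a word-parallel comparison in $O(\lceil\frac{m\log\sigma}{w}\rceil)$ time; the $m$ deletion probes cost one such comparison each as well. Summing the $O(m)$ comparisons gives the claimed $O(m\lceil\frac{m\log\sigma}{w}\rceil)$ expected query time, since the hashing and probing overheads are absorbed into this dominant term.

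For the construction bound, I would process each word $x_i$ of length $n_i$ in turn. Inserting $x_i$ into its exact dictionary requires computing one hash in $O(n_i)$ time followed by an expected $O(1)$ linear-probing insertion, where each probe costs at most an $O(\lceil\frac{n_i\log\sigma}{w}\rceil)=O(n_i)$ comparison. Inserting its $n_i$ substitution characters requires the $n_i$ hash values $h(x_{i,j})$, all obtainable in $O(n_i)$ total time by incremental hashing after an $O(n_i)$ preprocessing of $x_i$, followed by $n_i$ expected-$O(1)$ linear-probing insertions; here the single-element hypothesis is exactly what keeps every substitution list, and hence every insertion, short. The per-word cost is therefore expected $O(n_i)$, and summing over the dictionary yields expected $O(\sum_i n_i)=O(n)$ construction time.

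The main obstacle I anticipate is the query-time accounting rather than the construction: one must argue that the expected number of candidate strings requiring an explicit comparison stays $O(m)$, which is precisely where the single-element assumption and the signature filtering are needed, since without them a single heavily-populated substitution list could force up to $\sigma$ comparisons at one position. The remaining ingredients—the $O(1)$ expected probe length of linear probing under full randomness, and the $O(1)$ incremental hash evaluations after linear preprocessing—are standard once the assumptions are granted, and serve only to confirm that every cost except the string comparisons is dominated by the $O(m\lceil\frac{m\log\sigma}{w}\rceil)$ term.
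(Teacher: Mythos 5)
Your proposal is correct and follows essentially the same route as the paper: incremental hashing for $O(1)$ hash evaluations, the classical linear-probing bounds under full randomness for $O(1)$ expected probes, the single-element assumption to keep the candidate count at $O(m)$, and word-parallel comparisons dominating the query cost, with the per-word summation giving $O(n)$ construction. The only point where the paper is slightly more careful is in bounding the scan length when querying an \emph{existing} substitution list (the starting slot is conditioned on being non-empty, which the paper handles via an argument about runs of occupied slots being a constant fraction), whereas you fold this into the generic linear-probing analysis; this is a difference in rigor, not in approach.
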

\begin{proof}
Assuming incremental hashing and given a string of length $m$, computing the hash value for querying or inserting in each substitution list takes amortized $O(1)$ time, resulting in total $O(m)$ time for all the $O(m)$ substitution lists accessed by a query or an insertion. The assumption that each substitution list is of size $1$ means that the $n$ strings $x_{i,j}=x_i[1..j-1]\phi x_i[j+1..m_i]$ used as keys for insertions in the substitution hash table will all be distinct. Furthermore, it is well known the cost of inserting an element in a linear probing hash table takes expected $O(1)$ time if one assumes fully random hashing~\cite{Knuth63noteson}. Then it is obvious that the insertion of every substitution list element takes constant expected time giving a total $O(n)$ time to insert all the $n$ elements in the substitution lists for all $d$ words in the dictionary. It follows that the total time taken by the construction of the non-compacted version of the substitution store is $O(n)$ 
~\footnote{To see why the assumption that every substitution list be of size $1$ is useful to ensure expected constant time insertion for substitution list elements, one could consider substitution lists of maximal size $\sigma$. Since all elements of such a substitution list will be mapped to the same location in the hash table, the insertion of every element will have to traverse all the elements already inserted in the same substitution before reaching an empty slot, resulting in expected $\Omega(\sigma)$ insertion time.}
By similar arguments, the construction time of the exact dictionary will be expected $O(n)$ time, since inserting a string of length $n_i$ involves expected constant number of accesses to the hash table and each access cost $O(n_i)$ time. Summed up over all the $d$ words in the dictionary, the total insertion time becomes $O(n)$. Generating the compacted version of the substitution store from the non-compacted one takes trivially $O(n)$. We thus have that the construction time of both the compacted and non-compacted versions of the dictionary presented in this section is $O(n)$. 

We now prove that the expected time of a query for a string of length $m$ is $O(m\lceil\frac{m\log\sigma}{w}\rceil)$. 
Recall that retrieving the elements of a substitution list needs to probe successive hash table slots starting from a position $i$ until an empty slot is encountered and then collect all the characters from the traversed slots. We will distinguish two cases. The case where the substitution list is non-existent, in which case the initial position $i$ will follow a uniform distribution over all the slots in the hash table. Then the total number of traversed slots will follow the bound of an unsuccessful search in a linear probing hash table which is known to be constant in expectation~\cite{Knuth63noteson}. 
In the case of a query for an existing substitution list, then the number of traversed slots is also expected constant. 
First note that the total number of traversed slots from position $i$ is exactly the same as that needed for an insertion or unsuccessful search in the table starting from this position $i$. Note also that the slot number $i$ is non-empty. 
Consider now that the hash table is decomposed into runs of full slots followed and runs of empty slots. Since the empty slots are constant fraction of the total number of slots, we conclude that choosing a starting position uniformly at random from the non-empty slots will result also in an expected constant number of traversed slots before reaching an empty slot. Since every non-empty position represents a substitution list, we conclude that the expected number of candidate characters collected when querying for an existing substitution list is also constant. This implies that the total number of candidate strings generated from querying all substitution lists is expected $O(m)$. When querying the exact dictionary for a candidate string, it takes expected $O(\lceil\frac{m\log\sigma}{w}\rceil)$ time to compare the candidate string with the strings from the dictionary. This is because the exact dictionary is represented using linear probing and a query will traverse an expected constant number of slots, each time spending $O(\lceil\frac{m\log\sigma}{w}\rceil)$ time to compare the string from the dictionary with the candidate string. This finishes the proof that the running time of a query is expected $O(m\lceil\frac{m\log\sigma}{w}\rceil)$. 

\end{proof}
In practice, we expect that most of the substitution lists contain a single element (we precisely measure that this is actually true on the tested datasets). The full randomness assumption could have been removed, had we used tabulated hashing~\cite{PT12}. 
However as is done in most practical papers, we prefer to use faster hash functions that behave well in practice instead of trying to use slower hash functions with better theoretical guarantees~\footnote{The reader can refer to~\cite{MV08} for a theoretical justification of why simple hash function behave in practice as if they were fully random}.

\subsection{Extension to two or more errors}

We can show that our algorithm can be extended to work with two or more errors. 
For two errors we store two substitution stores. The first substitution store is used for one error (the level-1 substitution store). The second substitution store (the level-2 substitution store) will store for every word and every pair of distinct positions, the character at the first position. 
At query time, we first query the level-2 substitution store, then, for each character retrieved from the substitution list, substitute the first wild card with the retrieved character and go to level-1 substitution store in order to find all the characters to be substituted at the second wild card. We then substitute those characters and finally query the exact dictionary for the resulting strings. 

For example, for the word $\texttt{ALABAMA}$, we store a substitution list associated with $\texttt{A}\phi\texttt{ABA}\phi\texttt{A}$. We first query the substitution list to get a set of characters that could be substituted to the first wild card. Then for each such character $c$, we query the level 1 for substitution list with $\texttt{AcABA}\phi\texttt{A}$. In our example, we have $c=L$ and thus we will query the substitution list $\texttt{ALABA}\phi\texttt{A}$ which contains a single element consisting in the character $M$.

For checking the occurrences we use the same strategy as before. Generating the candidate occurrences takes $O(1)$ time and checking time $O(\lfloor\frac{m\log\sigma}{w}\rfloor)$. 

Concerning the space usage of the scheme, we note that the word $\texttt{ALABAMA}$ being of length $\ell=6$, will contribute $\frac{\ell(\ell-1)}{2}=\frac{6\cdot 5}{2}=15$ characters to the substitution lists. 

In the general case, we can state the following theorems which summarize the performance of our data structures for $2$ errors. 
We first summarize the space-usage. 
\begin{theorem}
The dictionary presented in this section occupies $O(N\log\sigma)$ bits, where:
$$N=\sum_{i=1}^{d}{n_i(n_i-1)/2}$$
where $n_i$ is the length of the word number $i$ of the dictionary. The peak space usage during the construction is $O(N\log\sigma)$ bits of space. If the bit-compaction is used, then the final dictionary occupies at most $(N+2n)(\log\sigma+O(1))$ bits of space. 
\end{theorem}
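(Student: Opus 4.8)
The plan is to mirror the structure of the proof of Theorem~\ref{theo:space_one_error}, replacing the count of $n$ single-position substitution keys by the count of $N=\sum_{i=1}^{d} n_i(n_i-1)/2$ ordered-pair contributions. First I would account for the input: the $d$ dictionary words occupy $n\log\sigma$ bits, which is dominated by the $O(N\log\sigma)$ term since $N\geq n-d$ for words of length at least one and typically $N$ grows quadratically in the word lengths. The main new ingredient is the level-$2$ substitution store. For each word $x_i$ of length $n_i$ and each ordered pair of distinct positions, we store the character at the first position; the number of such stored characters is exactly $\sum_i \binom{n_i}{2}=N$, as illustrated by the \texttt{ALABAMA} example where $\ell=6$ contributes $15$. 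Hence the level-$2$ store, run at constant load factor $\alpha$, has $N/\alpha=O(N)$ slots of width $\log\sigma+r=\log\sigma+O(1)$ bits each, giving $O(N\log\sigma)$ bits.

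Next I would fold in the remaining components, all of which are subsumed by the $O(N\log\sigma)$ bound: the level-$1$ substitution store contributes $O(n\log\sigma)$ bits exactly as in Theorem~\ref{theo:space_one_error}, and the exact dictionary contributes $O(n\log\sigma)$ bits, so the total peak construction space is $O(N\log\sigma)$ since $N=\Omega(n)$ whenever word lengths are bounded below by a constant. For the construction-time peak I would observe, as before, that the algorithm only reads the input and writes the output, both of total size $O(N\log\sigma)$ bits, so no asymptotically larger scratch space is needed.

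For the compacted bound I would invoke the bit-array compression of Section~\ref{sec:bit_compact} on each of the three stores. The level-$2$ store compacts to $N$ occupied slots of $\log\sigma+r+O(1)=\log\sigma+O(1)$ bits plus a rank-indexed bit vector of $N/\alpha + o(N/\alpha)=O(N)$ bits, for $N(\log\sigma+O(1))$ bits total. The level-$1$ store and exact dictionary together compact to $n(\log\sigma+O(1))+O(n)$ bits exactly as in the one-error case, contributing $n(\log\sigma+O(1))$ for the level-$1$ store and a further $n\log\sigma+O(n)$ for the exact dictionary's $n$ stored characters across the $d$ words. Summing the per-character $(\log\sigma+O(1))$ contributions over $N$ level-$2$ characters and the $2n$ characters of the other two stores yields the claimed $(N+2n)(\log\sigma+O(1))$ bits.

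The step I expect to require the most care is pinning down the exact constant-width accounting for the compacted stores so that the additive $O(1)$-per-slot overhead from the signature bits $r$ and from the $o(\cdot)$ rank-index term both collapse cleanly into the single $(\log\sigma+O(1))$ factor multiplying $N+2n$; this is the only place where one must verify that the low-order terms are genuinely linear in the number of stored characters rather than, say, linear in the uncompacted slot count $N/\alpha$, which would change the leading constant. Everything else is a routine re-tallying of the one-error analysis with $n$ replaced by the pair-count $N$, so I do not anticipate any conceptual obstacle beyond this bookkeeping.
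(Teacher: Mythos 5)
Your proposal is correct and follows essentially the same route as the paper, which simply invokes the argument of Theorem~\ref{theo:space_one_error} and adds the level-2 substitution store with its $N$ inserted elements; your expanded tally of the compacted sizes ($N(\log\sigma+O(1))$ for the level-2 store plus $n(2\log\sigma+O(1))$ for the one-error components, and the observation that the $N/\alpha$-bit rank-indexed vector is $O(N)$ because $\alpha$ is a fixed constant) is exactly the bookkeeping the paper leaves implicit.
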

\begin{proof}
The bounds can be proved using the same arguments used for theorem~\ref{theo:space_one_error}. The only difference 
is that that we now have in addition level-2 substitution store in which we insert $N$ elements. The space bounds easily follow. 
\end{proof}
And then analyze the query time. 

\begin{theorem}
Assuming fully random and incremental hashing and that each substitution list contains a single element, then given a query string of length $m$, the expected query time is $O(m^2\cdot \lceil\frac{m\log\sigma}{w}\rceil)$.
Given a dictionary of $d$ words such that $$N=\sum_{i=1}^{d}{n_i(n_i-1)/2}$$
where $n_i$ is the length of the word number $i$ of the dictionary. Then the expected construction time of the data structure is $O(N)$.  

\end{theorem}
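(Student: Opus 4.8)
The plan is to mirror the proof of Theorem~\ref{theo:time_one_error}, treating the two-error query as two composed instances of the single-error machinery. I would begin with the construction time. Building the level-1 substitution store and the exact dictionary is handled exactly as in Theorem~\ref{theo:time_one_error} and costs expected $O(n)$ time, with $n=\sum_i n_i$. The new ingredient is the level-2 store, into which we insert, for every word $x_i$ and every pair $1\le a<b\le n_i$, one element keyed by the two-wildcard string $x_i[1..a-1]\phi x_i[a+1..b-1]\phi x_i[b+1..n_i]$. The single-element-list assumption guarantees these $N=\sum_i n_i(n_i-1)/2$ keys are distinct, so under fully random hashing each linear-probing insertion costs expected $O(1)$ time. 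Using incremental hashing, I would process each word by enumerating its $\binom{n_i}{2}$ two-wildcard keys in an order where consecutive keys differ by a single edit, computing each key hash in amortized $O(1)$ time, for a per-word cost of $O(n_i(n_i-1)/2)$. Summing over all words yields $O(N)$, which dominates the $O(n)$ spent on the other two structures, so the total expected construction time is $O(N)$.

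For the query time I would first bound the number of substitution lists consulted. A two-error query enumerates the $\binom{m}{2}=O(m^2)$ two-wildcard patterns of the query string $x$; as in the single-error proof, insertions are folded in by shifting wildcard positions and deletions by directly generating the $O(m^2)$ strings obtained after one or two deletions and applying the lower-order machinery to them, so every edit combination contributes $O(m^2)$ patterns and the substitution cascade is the representative, dominant case. I would enumerate the two-wildcard patterns (outer loop over the first wildcard position, inner loop over the second) so that successive patterns differ by a single edit, computing each hash in amortized $O(1)$ time. Querying the level-2 store for each pattern traverses an expected constant number of slots and, by the same run-decomposition argument used in Theorem~\ref{theo:time_one_error} for both existing and non-existing lists, returns an expected constant number of candidate characters; hence by linearity of expectation the total number of level-1 queries triggered is expected $O(m^2)$.

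I would then compose the two levels. For each character $c$ returned at level~2, substituting it into the first wildcard yields a one-wildcard pattern differing from the current two-wildcard pattern by a single edit, so its hash is again computed in $O(1)$; the subsequent level-1 query costs expected constant slot traversals and returns an expected constant number of characters. By linearity of expectation the total number of candidate strings generated over the whole cascade is expected $O(m^2)$. Each candidate is finally checked against the exact dictionary, where a lookup traverses an expected constant number of linear-probing slots and each string comparison takes $O(\lceil\frac{m\log\sigma}{w}\rceil)$ time by word-parallel comparison. Multiplying the expected $O(m^2)$ candidates by the $O(\lceil\frac{m\log\sigma}{w}\rceil)$ per-check cost gives the claimed expected query time $O(m^2\cdot\lceil\frac{m\log\sigma}{w}\rceil)$.

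The main obstacle I anticipate is controlling the candidate count through the two-level composition: I must argue that the expected-constant-per-list bound survives being cascaded, i.e. that the expectation across the two dependent stages stays $O(m^2)$ rather than blowing up, which I would handle by a linearity-of-expectation (Wald-type) argument over the expected $O(m^2)$ level-1 queries, each contributing an expected constant number of candidates. A secondary subtlety is that the stated incremental-hashing definition only guarantees $O(1)$ for strings at distance one from the preprocessed string, whereas the two-wildcard patterns are at distance two; I would address this by enumerating all distance-two patterns along single-edit transitions, so that the $O(1)$ update applies verbatim at each step (exactly as realized by rolling or prefix-based hashes in practice).
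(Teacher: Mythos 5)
Your proposal is correct and follows essentially the same route as the paper, whose own proof is just a one-line reduction to Theorem~\ref{theo:time_one_error}: replace $m$ by $m^2$ for the number of substitution lists probed by a query and $n$ by $N$ for the number of insertions during construction. Your write-up merely fills in the details of that substitution (the two-level cascade with expected-constant candidates per list, and the amortized-constant hashing of the two-wildcard keys), which the paper leaves implicit.
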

\begin{proof}
Similarly to the space bounds, the time bounds can be also proved using the same arguments used for proving theorem~\ref{theo:time_one_error}. The cost of queries and insertions in the level-2 substitution store can be bounded in the same way as that of level-1 substitution store, except that we replace the term $m$ by $m^2$ for the queries since every query will probe in addition $O(m^2)$ level-2 substitution lists. We also replace the term $n$ by $N$ in the construction time, since we will have to insert to $N$ elements in the level-2 substitution store. 
\end{proof}

We have only implemented the algorithm for two errors, but our dictionary can be extended to handle $K>2$ errors. For that we use $K$ levels of substitution stores, where a word of length $n_i$ contributes ${{n_i}\choose{k}}\cdot \log\sigma$ bits to a level-$k$ substitution store ($1\leq k\leq K$).

\subsection{Implementation details}
Our implementation is modular. It consists of a dictionary for zero errors. If we want to support one error, then we add a 1-level substitution store. If we want to support two errors, then we add another substitution store on top of the first two components. 

All hash tables are implemented using linear probing with the same predefined load factor $\alpha<1$. To implement the exact dictionary, we store words of different lengths using different hash tables. Instead of storing $\sqrt{d}$ different hash tables 
for the $\sqrt{d}$ distinct word lengths, We use a threshold $\beta$ and use $\beta$ different hash table. All words of length $i<\beta$ will be stored in hash table number $i$ which consists in $\lceil n_i/\alpha\rceil$ slots where each slot is of length $i$ characters. All words of length $i\geq \beta$ will be stored in a hash table that stores pointers to the original words instead of the words themselves. 
In our implementations, we set $\beta=16$ characters. 

We now describe our implementation of the substitution stores, which will also be implemented using the hash tables based on linear probing. As described before, we have two variants on how to implement the hash tables, one in which each entry is just a character and the other where each entry is a character and a signature. We encode a character using one byte ($8$-bits) and encode a signature using $4$ bits. The encoding of signatures with $4$ bits simplifies the implementation. The hash table is divided into blocks of two slots occupying $3$ bytes each, the first byte is the character of the first slot, the third byte is the character of the second slot and the central (second byte) stores the signatures of the two slots. 

A query for a substitution list may have to traverse many slots in the hash table before reaching an empty one. 
In order to bound the worst-case query time, we have implemented the following strategy. Whenever a query for a substitution list 
traverses more than $\sigma$ slots in the hash table, we stop traversing the hash table, and just create $\sigma$ 
candidate strings from all possible $\sigma$ characters~\footnote{Note that traversing $\sigma$ slots does not 
mean that we will have collected $\sigma$ distinct candidate characters. The number could for example, be much smaller 
when hash signatures are used.}.

All our linear probing hash table are parameterized with a load factor $\alpha<1$. For simplicity we use the same parameter for all hash tables. If a hash table holds $n$ keys then its capacity will be $\lceil n/\alpha\rceil$. 

As for the hash function, we use a polynomial hash function (Rabin Karp hash functions~\cite{KR87}) modulo $2^{32}-5$. The hash values are computed modulo $2^{32}-5$, the largest prime that is smaller than $2^{32}$. The hash functions are very simple, using a randomly chosen seed $r\in[1..2^{32}-6]$, the hash value for a string $x$ is computed by $h(x)=\sum_{i=1}^{m}x_m\cdot r^m$. The polynomial hash function has a nice property of being incremental. We can in time $O(m)$ preprocess the string $x$, computing a certain number of vectors of $O(m)$ integers such that computing the hash value of any string at distance one from $x$ takes constant time (using a constant number of additions and multiplications modulo). See the details in~\cite{Be09}.

In case we use compaction, we will implement a bit vector of size $n'=\lceil n/\alpha\rceil$ bits that marks all non-empty slots and then build a compact hash table that stores only the non-empty $n$ slots. In order to support rank operations on blocks, we use a very simple implementation where we store the partial counts of the number of ones every $\delta$ words (we assume $32$-bit words). The partial count number $i$ will store the number of ones in the first $(i-1)\delta$ words. Instead of storing the partial counts in a separate vector, we instead store them interleaved with the bit vector (a partial count followed by $32\delta$ bits of the original bit vector). In order to support efficient counting of ones up to position $i$, we use the bit vector to get the count of the number of ones up to word number $\lfloor i/\delta\rfloor$ using the partial counts and then count the number of ones in the up to $32\delta-1$ bits which will span at most $\delta$ words using the popcount operation~\cite{bithacks} (although the instruction is not constant time, in practice it is very fast and can be considered as constant time). Thus the total space will be $n'+n'/\delta=\lceil n/\alpha\rceil(1+\delta)$ bits and the time will be $O(\delta)$.

We note that all our non-compacted methods are fully incremental, in the sense that adding an element to the dictionary should be possible and roughly take the same time than the total building time divided by the total number of strings. In contrast, the compacted methods will not be incremental, because of the compaction steps. 

Our implementation assumes $32$ bit integers and pointers, but it is trivial to change the code to work with $64$ bits (and hence manage larger dictionaries) without incurring much space or time overhead. The only component that would use larger space would be the exact dictionary for large words which uses pointers. All other components do not use pointers at all, and thus their space usage should not be affected. 

\section{Experiments}
We experimented with the following 2 datasets: the Wikipedia dataset which has around 1.8 million words and an English dictionary which has about 213 thousand words~\footnote{We thank Dennis Luxen for providing us with the Benchmarks.}. These two data sets are also used in the experiments in Karch et al's paper~\cite{KLS10}. They also use two other smaller datasets, Mobydick and Town. 
We experimented with the Mobydick, but given the small size of the datasets (37 thousand words), the result did not give any new insight (the whole data structure would fit in the CPU cache). We did not do experiments on Town as it was not available and anyway, the result would not have been much different from Mobydick given that the file contains only 47 thousand words. 

We implemented the data structure, the construction and query algorithms in C language using GNU GCC Compiler version 4.4.1. The  tests were done on an Intel 3.0 Gigahertz core 2 duo e8400 processor running Windows 7. We only used a single core. 

We experimented with different load factors (parameter $\alpha$) ranging from $0.3$ to $0.7$. The results are summarized in tables~\ref{table:compar_english1}~,~\ref{table:compar_wiki1}~,~\ref{table:compar_english2}~,~\ref{table:compar_wiki2}.  In the tables, LF stands for load factor. In the bit vector implementations, we experimented with $\delta=4$. That is we put a count every $4$ words. The results for the compacted versions of our indices on the English dictionary dataset are shown in figures ~\ref{fig:plot_compact_english1} and ~\ref{fig:plot_compact_english2}.

We compared our results with those of Karch et al. The reason we have chosen to do so is that the results of Karch et al. have proved to be faster than the competing approaches.
As the implementation of Karch et al. is not publicly available, we just took the results from Karch et al. paper. We note that the hardware used by Karch et al. is comparable to ours although they are not identical~\footnote{Their experiments where done on a on an Intel Xeon X5550 CPU at 2.67 GHZ which is a machine with very close performance to ours.}. Thus the query time comparison is not be completely accurate. However, given the close capacity of the two machines we may consider the comparison to be accurate up to a margin of error between $10\%$ and $20\%$. 
On the other hand the space usage is directly comparable, as it does not depend on the used hardware. 

To carry out our tests, we randomly choose $1000$ words from our dictionary and apply one randomly chosen edit operation on each of them and then query the dictionary. We then divide the time by $1000$ to get the final query time. We repeat the procedure $20$ times and average the result. 

We did not compare with Boytsov's result. From the similarities with our method we expect Boytsov's method to have 
similar space bounds and query times. However the method is inferior to ours in the following aspects: 
\begin{enumerate}
\item It is very slow to build~\cite{Bo12}, much slower than the alternatives and ours is faster than that of Karch et al. which itself is more competitive. The construction time can go up to $1$ hour for large datasets. In contrast, our construction time is less than one minute for all datasets. 
\item The use of minimal perfect hashing leads to a very large temporary space at construction time, even though the final space is small. The minimal perfect hashing method usually require at least $12$ bytes per entry, even though the final space is less than $3$ bits. In contrast the construction space of our non-compacted variant is exactly the same as the final space. 
\item The method is not incremental. The insertion of new words requires rebuilding the whole data structure. 
\end{enumerate}
\begin{figure}
\begin{center}
\includegraphics[scale=0.4]{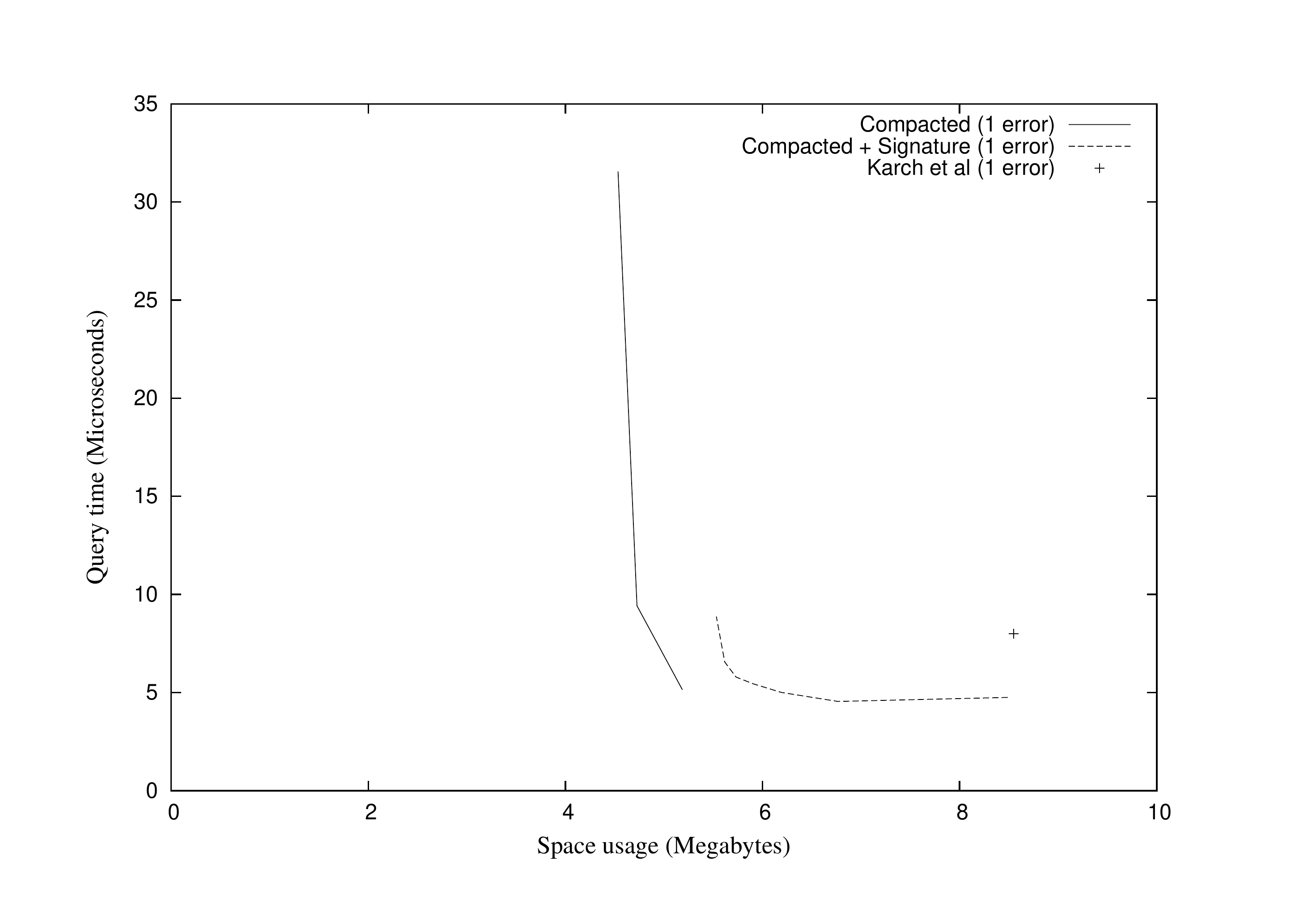}
\end{center}
\caption{Space usage versus query time on the English dictionary dataset with one error. Only compacted versions are shown, 
where we vary the load factor from 0.3 to 0.7. For Karch et al only one point is shown since the 
scheme for one error does not admit tradeoffs.}
\label{fig:plot_compact_english1}
\end{figure}

\begin{figure}
\begin{center}
\includegraphics[scale=0.4]{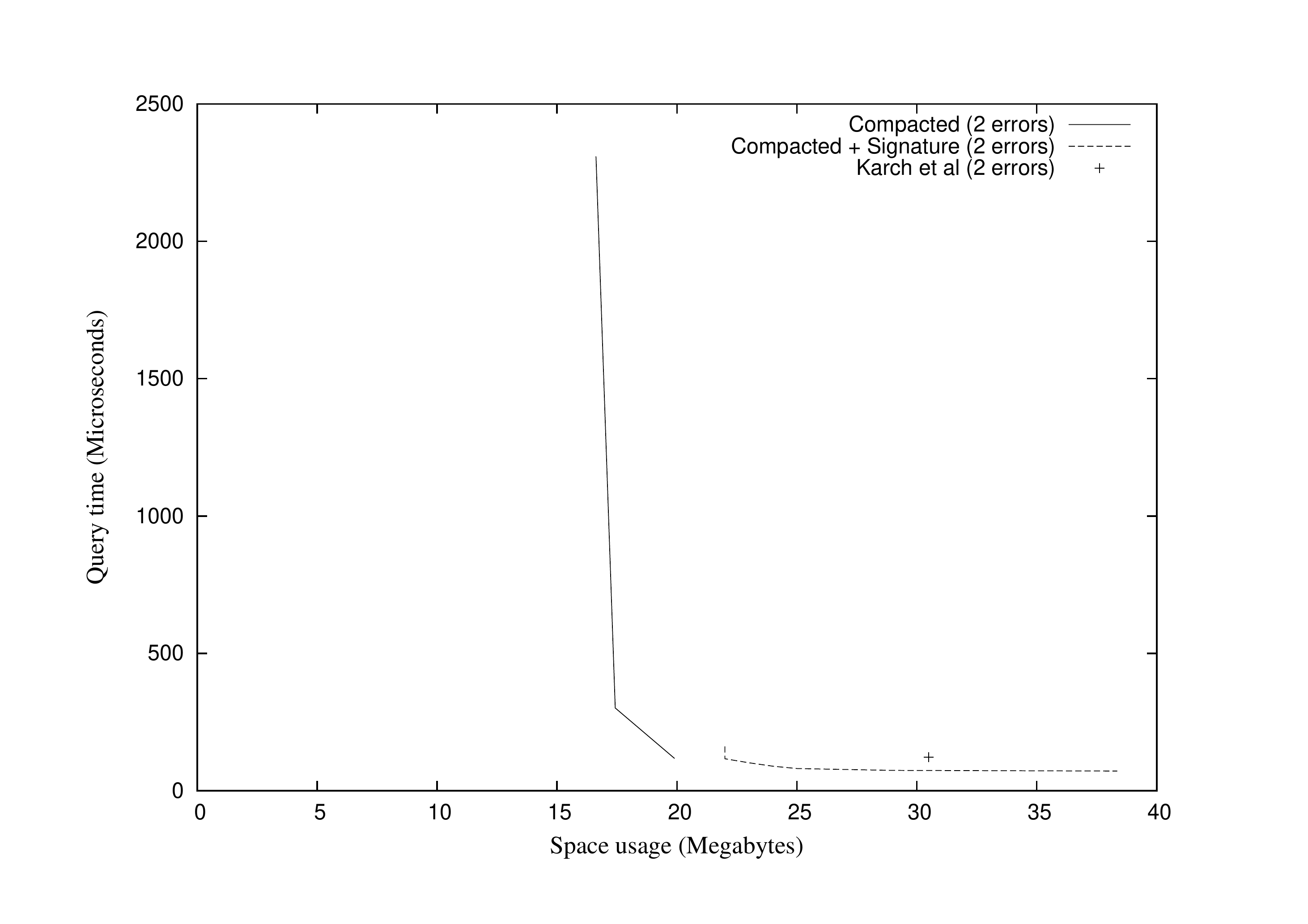}
\end{center}
\caption{Space usage versus query time on the English dictionary dataset with two errors. Only compacted versions are shown, where we vary the load factor from 0.3 to 0.7}
\label{fig:plot_compact_english2}
\end{figure}

\begin{table}
\small
\begin{center}
\begin{tabular}{|l|c|c|c|}
\hline
Method & Constr. time (seconds) & Space (Mb) & Query time ($\mu s$)\\
\hline
Karch et al     & 3.45
		& 8.55 
		& 8 \\

Non comp.	(LF 0.7) & 0.4
		& 5.72 
		& 27.6 \\
Non comp.  + Sign. (LF 0.7) 
		& 0.44 
	     	& 7.15
		& 6.24 \\
Compacted (LF 0.7)	& 0.43
		& 4.53
		& 31.6 \\

Compacted (LF 0.3)	& 0.47
		& 5.19
		& 5.154 \\

Comp. + Sign. (LF 0.7)
		& 0.48
		& 5.53
		& 8.9 \\
Comp. + Sign. (LF 0.3)
		& 0.625
		& 6.77
		& 4.55 \\
\hline

\end{tabular}

\end{center}
\caption{Comparison of existing methods on the English dictionary dataset for one edit error}
\label{table:compar_english1}
\end{table}

\begin{table}
\small
\begin{center}
\begin{tabular}{|l|c|c|c|}
\hline
Method & Constr. time (seconds) & Space (Mb) & Query time ($\mu s$)\\
\hline
Karch et al     & 32.29
		& 55.84
		& 34 \\
Non comp. + Sign. (LF 0.7) 
		& 4.61 
	     	& 54.44
		& 8.44 \\
Compacted (LF 0.5)	& 4.72
		& 36.27
		& 16.6 \\
Comp. + Sign. (LF 0.7)
		& 4.91
		& 42.3
		& 13.5 \\
Comp. + Sign. (LF 0.3)
		& 5.17
		& 47.5
		& 8.28 \\

\hline

\end{tabular}

\end{center}
\caption{Comparison of existing methods on the Wikipedia dataset for one edit error}
\label{table:compar_wiki1}
\end{table}

\begin{table}
\small
\begin{center}
\begin{tabular}{|l|c|c|c|}
\hline
Method & Constr. time (seconds) & Space (Mb) & Query time ($\mu s$)\\
\hline
Karch et al     & 12.596
		& 30.49 
		& 122 \\
Non comp. + Sign.	(LF 0.7) 
		& 2.612 
	     	& 27.87
		& 117 \\
Compacted (LF 0.7) & 2.54
		& 16.6
		& 2309\\

Comp. + Sign. (LF 0.7)
		& 2.802
		& 22.8
		& 162 \\
Comp. + Sign. (LF 0.4)
		& 2.95
		& 24.44
		& 89.27 \\
Comp. + Sign. (LF 0.2)
		& 3.1
		& 29.08
		& 73.89 \\
\hline
\end{tabular}

\end{center}
\caption{Comparison of existing methods on the English dictionary dataset for two errors}
\label{table:compar_english2}
\end{table}

\begin{table}
\small
\begin{center}
\begin{tabular}{|l|c|c|c|}
\hline
Method & Constr. time (seconds) & Space (Mb) & Query time ($\mu s$)\\
\hline
Karch et al     & 107.289
		& 170.79
		& 502 \\
Non comp. + Sign.	(LF 0.7) 
		& 21.29
	     	& 203.82
		& 503 \\

Compacted (LF 0.5)
		& 21.36
		& 127.87
		& 1993	\\
Comp. + Sign. (0.7)
		& 22.86
		& 162.57
		& 1271 \\
\hline

\end{tabular}
\end{center}
\caption{Comparison of existing methods on the Wikipedia dataset for two errors}
\label{table:compar_wiki2}
\end{table}

From the results, we can make the following observations:
\begin{enumerate}
\item All our algorithms are much faster to build than the reference algorithm. They are usually between $5$ and $10$ times faster to build. 
\item The algorithm for one edit error is very robust. It is up to $4$ times faster than Karch et al's result on very big datasets (Wikipedia). 
\item The compaction gives a very useful trade off between space and time. 
\item The query time of our method for two errors is not so good especially on the very large data sets. However it still provides a relevant space/time trade off. Even with the largest dataset, we can achieve very significant space advantage at the expense of a larger query time. 
\end{enumerate}

The non-compacted variants of our index naturally support insertions of new words. We did some experiments 
to back our claim that our index efficiently supports insertions of new elements. To that purpose, we first built our index 
on a fraction $0.7$ of the words in the Wikipedia dataset with a load factor of $0.7$. We then inserted a fraction $0.25$ of the 
original dataset (which means that the final load factor is $0.95$). We measured the insertion time~\footnote{We did the experiments on a different machine with slightly weaker performance (Intel Xeon with 2.27 GHz, 2 Gigabytes of RAM and Windows  xp operating system).} of the successive $0.05$ fractions of the original file. The results are reported in table~\ref{table:insert_test_wiki}. As can be seen from the table, the insertion takes only few microseconds for the one error dictionary 
and few dozens of microseconds on the two error dictionary.

\begin{table}
\small
{\centering
\begin{tabularx}{\textwidth}{|X|X|X|}

\hline

Load Factor & \multicolumn{2}{>{\centering\setlength{\hsize}{2\hsize}\addtolength{\hsize}{2\tabcolsep}}X|}
{Average insertion time in $\mu s$} \\

\cline{2-3}&1 error      &2 errors\\
\hline
0.7-0.75&2.42&13.34\\
0.75-0.80&2.44&13.83\\
0.80-0.85&2.52&16.66\\
0.85-0.90&3.52&30.11\\
0.90-0.95&5.56&43.74\\

\hline

\end{tabularx}
}
\caption{Insertion times in Wikipedia indexes for one and two errors}
\label{table:insert_test_wiki}

\end{table}

We experimentally verified the assumption that most of the substitution lists (for one edit error) contain one element. From table~\ref{table:subst_list_stats}, we can see that more than $96\%$ of the characters are in substitution lists of size $1$. 

\begin{table}[h]
\small
\begin{center}
\begin{tabular}{|c|c|c|}
\hline
List size & Percentage (English) &Percentage (Wikipedia)\\
\hline
1	        & 98.89\%
		& 96.16\%\\

2	        & 0.71\% 
		& 1.63\%\\

3		& 0.19\% 
		& 0.52\%\\

4		& 0.09\%
		& 0.29\%\\

5		& 0.05\%
		& 0.19\%\\

$\geq 6$	& 0.08\%
		& 1.20\%\\

\hline

\end{tabular}

\end{center}
\caption{Percentage of elements in substitution lists of given size for the two datasets}
\label{table:subst_list_stats}
\end{table}

We will now show that the heuristic based on dividing the dictionary and query strings into $k+1$ pieces, is not as robust 
as ours. Consider the following algorithm for the case $k=1$, where each query and dictionary string is divided into $2$ pieces. A given string $s_i$ of length $n_i$ is divided into a prefix $p_i$ of length $\lfloor n_i/2\rfloor$ and a suffix $q_i$ of length $\lceil n_i/2\rceil$. Then a pointer to the string $s_i$ is stored in the two lists of entries associated with the prefixes $p_i$ and $q_i$. A query for one substitution 
error on a string $x$ will then divide $x$ into two pieces, a prefix of length $\lfloor n_i/2\rfloor$ and a suffix of length $\lceil n_i/2\rceil$, and query for the lists of strings associated with the two pieces and check every candidate in the two lists. 
We simulated this process on our two datasets. We then queried the dictionary for every string $s_i$ in the dataset, and measured the average and maximum sizes of the candidate lists. The result is reported in table~\ref{table:heuristic_method_results}. 

\begin{table}[h]
\small
\begin{center}
\begin{tabular}{|c|c|c|}
\hline
Dataset & average list size & Maximal list size\\
\hline
English & 33.963307 & 613\\
\hline
Wikipedia & 232.028325 & 6923\\
\hline
\end{tabular}
\end{center}
\caption{Average and maximal size of candidate sets involved in queries used by the 
heuristic algorithm based on partitioning}

\label{table:heuristic_method_results}
\end{table}

As can be seen from the table, the maximal number of candidate strings for the larger dataset (Wikipedia)
is an order of magnitude larger than that for the smaller dataset (English). Even the average 
number of candidate strings is about $7$ times larger.


\section{Conclusion}
We have presented an efficient and robust algorithm for approximate string matching on dictionaries. 
The algorithms we propose are easy to implement and very efficient. 
The only compromise we made was that we insert all elements in a substitution list in the same location of the substitution store. This could give bad worst-case performance as any substitution list can have up to $\sigma$ elements in it. However in practice we expect that most substitution lists will have only a single element in them (and this is indeed the case on our datasets). 

The algorithm for one error uses roughly 2 times the size of an exact dictionary. The query time is linear in expectation (the expectation is on the random numbers and not on the randomness of the datasets) based on some assumptions which are almost met in practice. 
The algorithm for two errors has super linear space use, and in practice is slower than the first one. However it still achieves a relevant time/space trade off. 

We also note that the non-compacted versions of our structures are incremental (with insertion time at worst few dozens of microseconds) which was not the case of the method of Boytsov~\cite{Bo12} (which besides has a very slow construction time). 

\section*{Acknowledgments}
We thank Nacera Bensaou for her encouragement and fruitful discussions at the early stages of this work. 
We thank Simon Puglisi for his many useful comments and remarks. The authors are grateful to the 
reviewers for their constructive comments and remarks. 
\bibliographystyle{plain}
\bibliography{approx_dict}

\end{document}